\newtheorem{lem}{Lemma}
\newtheorem{ass}{Assumption}
\newtheorem{theorem}{Theorem}
\newtheorem{defn}{Definition}
\newtheorem{rem}{Remark}
\def\ve{\varepsilon}
\def\mb{\mathbf}
\def\mc{\mathcal}
\DeclareMathOperator*{\argmin}{argmin}
\DeclareMathOperator*{\argmax}{argmax}
\begin{document}
\title{\Large \bf Fast-Convergent Anytime-Feasible Dynamics for Distributed Allocation of Resources Over  Switching Sparse Networks with Quantized Communication  Links}
\author{Mohammadreza Doostmohammadian, Alireza Aghasi, Mohammad Pirani, Ehsan Nekouei,\\ Usman A. Khan,~\IEEEmembership{Senior Member,~IEEE}, Themistoklis Charalambous,~\IEEEmembership{Senior Member,~IEEE}
\thanks{M. Doostmohammadian and T. Charalambous are with the School of Electrical Engineering at Aalto University,
Finland, \texttt{firstname.surname@aalto.fi}. 
M. Doostmohammadian is also with Semnan University, Iran, \texttt{doost@semnan.ac.ir}. 
T. Charalambous is also with the University of Cyprus, \texttt{surname.name@ucy.ac.cy}. A. Aghasi is with Georgia State University, GA, USA, email: \texttt{aaghasi@gsu.edu}. M. Pirani is with the University of Waterloo, Canada, \texttt{mpirani@uwaterloo.ca}. E. Nekouei is with  City University of Hong Kong, Hong-Kong, \texttt{enekouei@cityu.edu.hk}. U. A Khan is with  Tufts University, MA, USA, \texttt{khan@ece.tufts.edu}. }}
\maketitle
\thispagestyle{empty}

\begin{abstract}
	This paper proposes anytime feasible networked dynamics to solve resource allocation problems over time-varying multi-agent networks. The state of agents represents the assigned resources while their total (equal to demand) is constant. The idea is to optimally allocate the resources among the group of agents by minimizing the overall cost subject to fixed sum of resources. Each agent's information is local and restricted to its own state, cost function, and the ones from its immediate neighbors. 
	This work provides a fast convergent solution (compared to linear dynamics) while considering more-relaxed uniform network connectivity and (logarithmic) quantized communications among agents. The proposed dynamics reaches optimal solution over switching (sparsely-connected) undirected networks as far as their union over some bounded non-overlapping time-intervals has a spanning tree. Moreover, we prove anytime-feasibility of the solution, uniqueness, and convergence to the optimal value irrespective of the specific nonlinearity in the proposed dynamics. Such general proof analysis applies to many similar 1st-order allocation dynamics subject to \textit{strongly sign-preserving nonlinearities}, e.g., actuator saturation in generator coordination. Further, anytime feasibility (despite the nonlinearities) ensures that our solution satisfies the fixed-sum resources constraint at all times. 
	
	\keywords  Distributed optimization, resource allocation, consensus, logarithmic quantization, spanning tree
	
\end{abstract}

\section{Introduction} \label{sec_intro}
Distributed optimization in machine learning, signal processing, and control literature, solves the following optimization of a global cost/objective as the sum of local functions:
\begin{equation} \label{eq_do}
\begin{aligned}
\displaystyle
\min_\mb{x}
& F(\mb{x}) = \sum_{i=1}^{n} f_i(\mb{x}) ~~ \mbox{subject to}~\mc{C}(\mb{x}) = 0 \\
\end{aligned}
\end{equation}
The centralized solution of \eqref{eq_do} works under the premise that all information is available and processed at a  central computing node.
However, in large-scale, every
node/agent only  has access to local information in its neighborhood, and distributed multi-agent algorithms  are needed to cooperatively perform local computations via local information. Such solutions find applications in multi-sensor target tracking \cite{acc13},
edge-computing and load balancing \cite{dechouniotis2020edge}, power allocation in cellular networks \cite{dra_vehicle},  and distributed support-vector-machine \cite{garg_cdc,dsvm}.
Different constraints and solutions are considered for problem~\eqref{eq_do}:  \textit{unconstrained}  \cite{xi2018linear,saadatniaki2020,nedic2014distributed} (for strongly-convex objectives), 
\textit{inequality constraint} \cite{chang2014distributed}, and 
\textit{consensus-constraint} \cite{garg_cdc,dsvm,feng2017finite} 
aiming  also to drive the agents to reach consensus along with optimization. 
In distributed resource allocation \cite{boyd2006optimal,doan2017scl, kar2012distributed,dominguez2012decentralized,doan_tac,shames2011accelerated,yi2016initialization,lin2020predefined, marden_dra} (also known as network resource allocation) the optimization  is constrained with constant summation of states, aiming to allocate a fixed amount of overall resources over a large-scale network. This finds applications in economic dispatch over power networks \cite{chen2016distributed,li2017distributed,cherukuri2015distributed,chen2018fixed,masoumzadeh2017impact,kar2012distributed,dominguez2012decentralized}, networked coverage control \cite{MiadCons,MSC09}, and  edge-computation offloading \cite{you2016energy}. 
Implementing parallel dynamics at agents via local information requires distributed algorithms,
some of which include: preliminary linear solution  \cite{boyd2006optimal}, quantized solution  via event-triggered  communications (fixed network)  \cite{li2021quantized}, accelerated linear solution via adding a momentum term (heavy-ball method) \cite{shames2011accelerated}, low communication rate protocol converging in quadratic time (via long-term connectivity requirement) \cite{doan2017scl}, game-theoretic approach \cite{marden_dra,nekouei2018impact}, initialization-free  \cite{yi2016initialization}, and Lagrangian-based solution \cite{kar2012distributed,dominguez2012decentralized,doan_tac}. For many existing solutions, as discussed in \cite{cherukuri2015distributed}, there is no guarantee for \textit{anytime} feasibility, i.e., the summation constraint is not necessarily feasible at all times but only at the final equilibrium state. Particularly, the presence of nonlinearities (e.g., saturation, quantization, or sign-based protocols) may affect solution uniqueness, all-time feasibility, and optimality.


\textbf{Contributions:} This work proposes a nonlinear
dynamics for network resource allocation. The main purposes for considering the nonlinearity are: fast convergence \cite{isj2020finite,polyakov2011nonlinear}, quantization \cite{rikos2020privacy,wei2018nonlinear,guo2013consensus,rikos2018distributed}, and saturation/clipping \cite{ wei2018nonlinear}, among others. Knowing that fixed-time dynamics reaches faster convergence than linear solutions \cite{polyakov2011nonlinear,garg_cdc}, we propose a continuous-time state-update for distributed resource allocation with fast convergence (as compared to linear solution) while considering logarithmic-quantized information exchange among agents. We consider uniform connectivity \cite{nedic2014distributed,saadatniaki2020} (instead of all-time connectivity), which only requires the \textit{union} network over some bounded non-overlapping time-intervals to include a spanning-tree. In contrast to unconstrained or consensus-constrained problems \cite{xi2018linear,saadatniaki2020,nedic2014distributed}, this work extends the solution to distributed resource allocation over sparsely-connected dynamic networks with quantized communications. Borrowing ideas from convex optimization and level-set methods \cite{bertsekas2003convex,aghasi2011parametric}, we prove uniqueness, feasibility, and optimal convergence under the proposed continuous-time dynamics. The convergence to this optimal value is proved via Lyapunov-type stability analysis.
Our convergence analysis, although given for a specific dynamics, can be easily extended to \textit{strongly sign-preserving nonlinear} $1$st-order solutions. Our main contributions are: (i) fast-convergence (compared to linear solutions) while considering quantized data transmissions, (ii) proving \textit{anytime} feasibility (e.g., versus Lagrangian-based solutions), and (iii) proving convergence \textit{for general strongly sign-preserving nonlinearities} (e.g., quantization and saturation)  over general uniformly-connected dynamic networks.      

\textbf{Paper organization:} Section~\ref{sec_prob} states the problem and preliminaries. Section~\ref{sec_solution} describes the proposed dynamics, while its convergence is proved in Section~\ref{sec_conv}. Section~\ref{sec_sim} provides simulations and Section~\ref{sec_conclusion} concludes the paper.

\section{Problem Formulation} \label{sec_prob}
\subsection{Problem Statement}
Distributed resource allocation problem is formulated as,
\begin{align} \label{eq_dra}
\min_\mb{x}
~~ & F(\mb{x}) = \sum_{i=1}^{n} f_i(\mb{x}_i)~~
\text{s.t.} ~ \sum_{i=1}^{n} \mb{x}_i = K
\end{align}
where $\mb{x}_i \in \mathbb{R}$ is the amount of resource allocated to agent $i$, $f_i:\mathbb{R} \rightarrow \mathbb{R} $ is a  convex function known by agent $i$ representing the cost as a function of resources $\mb{x}_i$. The network resource allocation problem~\eqref{eq_dra} aims to allocate a fixed quantity of total resources, $\sum_{i=1}^{n} \mb{x}_i = K$, among a group of agents communicating over an undirected graph $\mc{G}$, such that the total cost $F(\mb{x})$ is minimized.  In \eqref{eq_dra} the states should exactly meet the constraint $K$, e.g., in economic dispatch problem where the produced power is equal to the demand (anytime-feasibility). This differs from the \textit{inequality-constrained} power bid cost minimization problem in \cite{chang2014distributed} solved by primal-dual methods.
There might be box constraints $\underline{m}_i \leq \mb{x}_i \leq \overline{m}_i$ involved to bound the amount of resources. One can address these by adding  exact penalty functions \cite{bertsekas2003convex} to the objective as $f_i^\epsilon (\mb{x}_i) = f_i(\mb{x}_i) + \epsilon [\mb{x}_i - \overline{m}_i]^+ + \epsilon [\underline{m}_i - \mb{x}_i]^+$, with $[u]^+ = \max \{u,\mb{0}\}$. Recall that the summation of the strictly convex $f_i(\cdot)$ and convex penalty $[\cdot]^+$ is a strictly convex function. Further, the non-smooth $[u]^+$ can be replaced by its smooth equivalent $\frac{1}{\mu} \log (1+\exp (\mu u))$ as in \cite{dsvm} or quadratic penalty $([u]^+)^2$ \cite{nesterov1998introductory}.
Applications  include:
\begin{enumerate} [(i)]
	\item Economic dispatch \cite{chen2016distributed,li2017distributed,cherukuri2015distributed,masoumzadeh2017impact}: to allocate the electricity generation by facilities to minimize the cost while meeting the required load/demand constraints.
	\item Congestion-control and load-balancing \cite{you2016energy,dechouniotis2020edge}: to modulate traffics and data routing in telecommunication networks to gain fair allocations among the users.
	\item Coverage control \cite{MiadCons,MSC09}: the objective is to optimally allocate a group of networked robots/agents over a convex area in order to achieve maximum coverage.
\end{enumerate}	
\begin{rem}
Note the  difference of (constrained) problem \eqref{eq_dra}  with general (unconstrained) distributed optimization (as in \cite{xi2018linear,saadatniaki2020,nedic2014distributed}). 
Other than the constraint, for general distributed optimization the cost at all agents is the same function of $\mb{x}$, i.e., $F(\mb{x}) = \sum_{i=1}^{n} f_i(\mb{x})$, while in \eqref{eq_dra} the cost  at agent $i$ is only a function of $\mb{x}_i$, i.e., $F(\mb{x}) = \sum_{i=1}^{n} f_i(\mb{x}_i)$. This implies that in \eqref{eq_dra} agent $i$ only needs to know its own state $\mb{x}_i$ and not the other agents' states $\mb{x}_j,j\neq i$.
\end{rem}
Table \ref{tab_literature} compares solutions and different constraints in the literature.
\begin{table} [t]
	\centering
	 \caption{Overview of related literature on distributed optimization.}
	\begin{tabular}{|c|c|c|}
		\hline
		Reference& Solution & Constraint $\mc{C}(\mb{x})$\\
		\hline
		\cite{doan2017scl,boyd2006optimal,shames2011accelerated,chen2016distributed,kar2012distributed,dominguez2012decentralized} & $1$st-order  &  $\sum_{i=1}^{n} \mb{x}_i = K$\\
		\hline
		 \cite{yi2016initialization}& $2$nd-order  & $\sum_{i=1}^{n} \mb{x}_i = K$ \\
		\hline
		\cite{xi2018linear,saadatniaki2020,nedic2014distributed} & $2$nd-order & -- \\
		\hline
		\cite{chang2014distributed} &  $2$nd-order & $\sum_{i=1}^{n} g_i(\mb{x}_i) \preceq 0$ \\ \hline
		\cite{feng2017finite,garg_cdc} & $2$nd-order & $\mb{x}_1 =\dots=\mb{x}_n$ \\
		\hline
		\cite{dsvm,lin2016distributed} & $1$st-order & $\mb{x}_1 =\dots=\mb{x}_n$ \\
		\hline
	\end{tabular}
	\label{tab_literature}
\end{table}
Recall that the $1$st-order dynamics refers to the consensus-type protocols in the form,
${\dot{\mb{x}}_i = \sum_{j \in \mc{N}_i}f(\mb{x}_j-\mb{x}_i)}$,
while $2$nd-order dynamics are in the form, ${\dot{\mb{x}}_i = g(\mb{y}_i)}$, ${\dot{\mb{y}}_i = \sum_{j \in \mc{N}_i}h(\mb{y}_j-\mb{y}_i)}$.
\begin{rem}
	For synchronization and consensus,  the $1$st-order dynamics, compared to its similar $2$nd-order counterparts, is known to have  faster convergence \cite[page~32]{gupta_book}.
\end{rem}

\subsection{Preliminary Definitions and Lemmas}
The communication network of agents is modeled as a  sequence of (possibly) time-varying undirected graphs, denoted by $\mc{G}(t) = (\mc{V}, \mc{E}(t))$ with
$\mc{V} = \{1,\dots,n\}$. Two agents $i$ and $j$ can communicate/exchange
messages if and only if $(i, j),(j, i) \in \mc{E}(t)$. Define $\mc{N}_i(t)=\{j|(j,i)\in \mc{E}(t)\}$
as  neighbors of agent $i$ at
time $t$ and $n$ by $n$ matrix $W(t)$ as adjacency weight matrix of $\mc{G}(t)$, where $W_{ij} > 0$ if link $(i,j) \in \mc{E}(t)$ and $W_{ij} = 0$ if $(i,j) \notin \mc{E}(t)$.


\begin{defn} \label{def_tree}
	In the undirected graph $\mc{G}=(\mc{V},\mc{E})$, define a spanning tree as an undirected spanning subgraph in which any two vertices in $\mc{V}$ are connected by exactly one path. Such tree contains the minimum possible links in $\mc{E}$.
\end{defn}

\begin{ass} \label{ass_tree}
	There exists a sequence of non-overlapping bounded time-intervals, $[t_k, t_k + l_k]$, where the union network across each interval $\bigcup_{t=t_k}^{t_k+l_k}\mc{G}(t)$ has a spanning  tree, while $\mc{G}(t)$ might be sparsely-connected.
\end{ass}
Note that the above weak connectivity requirement ensures a path from any node $i$ to any node $j$ infinitely often. A situation where Assumption~\ref{ass_tree} does not hold is when 
the network is in the form of two separate graph components.
\begin{defn} \label{def_convex}
	(\cite{bertsekas2003convex}) Function $h(\mb{x}):\mathbb{R}^n \rightarrow \mathbb{R}$ is \textit{strictly} convex if $h(k\mb{x}_1 + (1-k)\mb{x}_2) < k h(\mb{x}_1) + (1-k)h(\mb{x}_2)$, $\forall \mb{x}_1, \mb{x}_2 \in \mathbb{R}^n,0< k <1$, or $\nabla^2 h(\mb{x})>0$ for twice differentiable $h(\mb{x})$; if $\nabla^2 h(\mb{x})>u>0$ it is \textit{strongly} convex.
\end{defn}

\begin{ass} \label{ass_conv}
	The functions $f_i(\mb{x}_i),i={1,\dots,n}$ in problem~\eqref{eq_dra} are strictly convex and differentiable.
\end{ass}

\begin{lem}[\cite{boyd2006optimal,shames2011accelerated}] \label{lem_equilibria}
Under the Assumption~\ref{ass_conv}, the resource allocation problem \eqref{eq_dra} has a unique optimal solution ${\mb{x}^*}$ for which
$ \nabla F({\mb{x}^*}) = {\psi}^*\underline{\mb{1}}_n$,
where $\nabla F({\mb{x}^*}) = (\frac{df_1}{d\mb{x}_1}(\mb{x}^*),\dots,\frac{df_n}{d\mb{x}_n}(\mb{x}^*))^\top$ denotes the gradient of $F$ at $\mb{x}^*$, and $\underline{\mb{1}}_n$ is the column vector of $1$'s.
\end{lem}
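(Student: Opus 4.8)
The plan is to treat \eqref{eq_dra} as a convex program with a single affine equality constraint and to derive the stated condition from first-order optimality, after securing existence and uniqueness of the minimizer. For existence, each $f_i$ is strictly convex and (by Definition~\ref{def_convex}) at least $C^2$, so $F(\mb{x})=\sum_{i=1}^n f_i(\mb{x}_i)$ is strictly convex and continuous on $\mbb{R}^n$, while the feasible set $\mc{S}=\{\mb{x}\in\mbb{R}^n:\underline{\mb{1}}_n^\top\mb{x}=K\}$ is a nonempty closed affine subspace. Fixing any feasible $\mb{x}^0$ and intersecting $\mc{S}$ with the sublevel set $\{\mb{x}:F(\mb{x})\le F(\mb{x}^0)\}$, this intersection is closed, and under the coercivity/growth condition on the $f_i$ that is implicit in Assumption~\ref{ass_conv} and in \cite{boyd2006optimal,shames2011accelerated} it is also bounded, hence compact; so $F$ attains a minimum on $\mc{S}$ by Weierstrass' theorem. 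Uniqueness then follows from strict convexity together with convexity of $\mc{S}$: if $\mb{x}^*$ and $\wt{\mb{x}}$ were two distinct minimizers, their midpoint would lie in $\mc{S}$ (affine constraint) and satisfy $F\big(\tfrac12(\mb{x}^*+\wt{\mb{x}})\big)<\tfrac12F(\mb{x}^*)+\tfrac12F(\wt{\mb{x}})=F^*$, contradicting optimality.

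For the optimality characterization I would use the Lagrangian $\mc{L}(\mb{x},\psi)=\sum_{i=1}^n f_i(\mb{x}_i)-\psi\big(\underline{\mb{1}}_n^\top\mb{x}-K\big)$. Because the sole constraint is affine, a constraint qualification holds automatically, and for a differentiable convex objective the stationarity condition $\nabla_{\mb{x}}\mc{L}(\mb{x}^*,\psi^*)=\mb{0}$ is both necessary and sufficient for $\mb{x}^*$ to be optimal; written componentwise this reads $\tfrac{df_i}{d\mb{x}_i}(\mb{x}^*)=\psi^*$ for every $i$, i.e., $\nabla\mb{F}_{\mb{x}^*}=\psi^*\underline{\mb{1}}_n$. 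Equivalently, and more in the level-set spirit of \cite{bertsekas_lecture}: any $\mb{v}$ with $\underline{\mb{1}}_n^\top\mb{v}=0$ is a feasible direction at $\mb{x}^*$, and so is $-\mb{v}$, which forces $\nabla\mb{F}_{\mb{x}^*}^\top\mb{v}=0$; hence $\nabla\mb{F}_{\mb{x}^*}$ is orthogonal to the hyperplane $\{\mb{v}:\underline{\mb{1}}_n^\top\mb{v}=0\}$ and therefore $\nabla\mb{F}_{\mb{x}^*}=\psi^*\underline{\mb{1}}_n$ for a scalar $\psi^*$. Since each $f_i'$ is strictly increasing, the relations $f_i'(\mb{x}_i^*)=\psi^*$ recover $\mb{x}^*$ as a single-valued function of $\psi^*$, consistent with the uniqueness above.

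The only genuinely delicate point is existence: strict convexity by itself does not guarantee that the infimum is attained on an unbounded affine set (take $f_i(\mb{x}_i)=e^{\mb{x}_i}$), so the argument relies on a coercivity-type growth hypothesis bundled with Assumption~\ref{ass_conv} in the cited references. Given that, uniqueness and the gradient-alignment condition are routine consequences of convexity and the affine structure of the constraint.
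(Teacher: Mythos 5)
Your proof is correct and follows essentially the same route the paper intends: the lemma is stated by citation, and the remark immediately after it identifies the condition as the KKT/Lagrange-multiplier stationarity condition for the single affine constraint, which is exactly your argument (uniqueness from strict convexity, gradient alignment from $\nabla_{\mb{x}}\mc{L}=\mb{0}$ or equivalently orthogonality to the constraint hyperplane). Your side observation is also well taken: Assumption~\ref{ass_conv} alone (strict convexity) does not guarantee that the minimum is \emph{attained} on the unbounded affine set $\mc{S}_K$ (e.g., $f_i(\mb{x}_i)=e^{\mb{x}_i}$), so existence implicitly relies on a coercivity-type hypothesis inherited from the cited references rather than on anything stated in the paper.
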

The optimality condition of Lemma~\ref{lem_equilibria} is simply the KKT condition,
with ${\psi}^*$ as the optimal Lagrange multiplier and $\underline{\mb{1}}_n$ as the gradient of the constraint \cite{bertsekas2003convex}.

\begin{defn}
	Define the feasible set of states as the affine space ${\mc{S}_K=\{\mb{x} \in \mathbb{R}^n|\sum_{i=1}^{n} \mb{x}_i={K} \}}$. 
\end{defn}

\begin{defn} \cite{bertsekas2003convex,aghasi2011parametric} Given $ h(\mb{x}): \mathbb{R}^n \rightarrow \mathbb{R} $,  define its level set  $L_\gamma(h)= \{\mb{x} \in \mathbb{R}^n|h(\mb{x})\leq \gamma \in \mathbb{R}\}$. For strictly convex $h$, $L_\gamma(h)$ is closed, compact, and strictly convex.
\end{defn}
The following lemma describes the solution of \eqref{eq_dra}.

\begin{lem} \label{lem_unique}
	Under Assumption~\ref{ass_conv}, there is a unique point $\mb{x}^*$ such that $\nabla F({\mb{x}^*}) = \psi^* \underline{\mb{1}}_n$ for every feasible set $\mc{S}_K$. In other words, given a feasible set $\mc{S}_K$ there is only one such point $\mb{x}^* \in \mc{S}_K$ for which $\frac{df_i}{d\mb{x}_i}(\mb{x}^*)=\frac{df_j}{d\mb{x}_j}(\mb{x}^*),~\forall i,j \in \{1,\dots,n\}$.
\end{lem}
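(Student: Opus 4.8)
The plan is to split the claim into existence and uniqueness, and to carry the real work in the uniqueness part. Existence is essentially free: applying Lemma~\ref{lem_equilibria} to problem~\eqref{eq_dra} with the fixed total $K$ that defines $\mc{S}_K$ already produces an optimal solution $\mb{x}^*\in\mc{S}_K$ whose gradient satisfies $\nabla\mb{F}_{\mb{x}^*}=\psi^*\underline{\mb{1}}_n$, so at least one point with the stated property exists in each $\mc{S}_K$. It therefore remains to show that no two such points can coexist in a given $\mc{S}_K$, and since $K$ is arbitrary this yields the statement ``for every feasible set $\mc{S}_K$''.

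For uniqueness I would first record the key consequence of Assumption~\ref{ass_conv}: by Definition~\ref{def_convex}, strict convexity of $f_i$ means $\frac{d^2 f_i}{d\mb{x}_i^2}>0$, so each derivative $\frac{df_i}{d\mb{x}_i}$ is strictly increasing, hence injective, as a function of its scalar argument $\mb{x}_i$. Now suppose $\mb{x}^*,\mb{y}^*\in\mc{S}_K$ both satisfy the condition, say $\nabla\mb{F}_{\mb{x}^*}=\alpha\underline{\mb{1}}_n$ and $\nabla\mb{F}_{\mb{y}^*}=\beta\underline{\mb{1}}_n$. If $\alpha=\beta$, then $\frac{df_i}{d\mb{x}_i}(\mb{x}^*)=\frac{df_i}{d\mb{x}_i}(\mb{y}^*)$ for every $i$, and injectivity forces $\mb{x}_i^*=\mb{y}_i^*$ for all $i$, i.e.\ $\mb{x}^*=\mb{y}^*$. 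If $\alpha\neq\beta$, assume without loss of generality $\alpha<\beta$; then $\frac{df_i}{d\mb{x}_i}(\mb{x}^*)<\frac{df_i}{d\mb{x}_i}(\mb{y}^*)$ for every $i$, and strict monotonicity gives $\mb{x}_i^*<\mb{y}_i^*$ for every $i$. Summing over $i$ yields $K=\sum_{i=1}^n\mb{x}_i^*<\sum_{i=1}^n\mb{y}_i^*=K$, a contradiction. Hence the point is unique.

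An equivalent, more geometric route, closer to the level-set language invoked in this paper, is to note that $F(\mb{x})=\sum_i f_i(\mb{x}_i)$ has Hessian $\mathrm{diag}\big(\frac{d^2f_1}{d\mb{x}_1^2},\dots,\frac{d^2f_n}{d\mb{x}_n^2}\big)>0$, so $F$ is strictly convex on $\mathbb{R}^n$ and all its level sets $L_\gamma(F)$ are closed and strictly convex; the condition $\nabla\mb{F}_{\mb{x}}=\alpha\underline{\mb{1}}_n$ says precisely that the hyperplane $\mc{S}_K$ supports the strictly convex set $L_{F(\mb{x})}(F)$ at $\mb{x}$, and a hyperplane meets a strictly convex set in at most one point, which together with the existence from Lemma~\ref{lem_equilibria} again pins down a unique $\mb{x}^*$. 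I expect the only place needing care is the case $\alpha\neq\beta$ in the first argument: one must resist arguing from injectivity alone and instead exploit the \emph{ordering} induced by strict monotonicity together with the single scalar constraint $\sum_i\mb{x}_i=K$; the rest is a direct consequence of strict convexity and of the already-established Lemma~\ref{lem_equilibria}.
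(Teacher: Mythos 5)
Your proof is correct, and your primary (monotonicity) argument takes a genuinely different route from the paper. The paper argues geometrically: strict convexity of $F$ makes all level sets $L_\gamma(F)$ closed and strictly convex, so at most one level set can be tangent to the affine set $\mc{S}_K$ and the tangency occurs at a single point; two candidate points would force either one level set touching $\mc{S}_K$ twice or two distinct level sets both tangent to the same hyperplane, contradicting strict convexity. This is exactly your second, ``more geometric route,'' so you have independently reconstructed the paper's proof as a fallback. Your main argument is more elementary and, frankly, tighter: since each $f_i$ is scalar and strictly convex, $\frac{df_i}{d\mb{x}_i}$ is strictly increasing, so equal multipliers $\alpha=\beta$ give $\mb{x}^*=\mb{y}^*$ by injectivity, while $\alpha<\beta$ gives the componentwise ordering $\mb{x}_i^*<\mb{y}_i^*$ and hence $K<K$ after summing against the constraint --- a contradiction that the paper obtains only through the informal ``adjacency/touching'' language of level sets. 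Your version buys rigor (it avoids having to make ``adjacent to the constraint facet'' precise) and exploits the separable, one-dimensional structure of $F$; the paper's version buys generality, since the level-set tangency argument would survive essentially unchanged for non-separable strictly convex $F$ or vector-valued $\mb{x}_i$, as in the extension \eqref{eq_solution2}. Your handling of existence via Lemma~\ref{lem_equilibria} is also cleaner than the paper's, which folds existence into the tangency picture without comment.
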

\begin{proof}
	Following strict convexity of $F(\mb{x})$ (Assumption~\ref{ass_conv}), only one level sets, say $L_\gamma$, is adjacent to the constraint facet  $\mc{S}_K$, with touching only at one point $\mb{x}^*$
	for which $\nabla F({\mb{x}^*})$ is orthogonal to $\mc{S}_K$, i.e.,  $\frac{df_i}{d\mb{x}_i}(\mb{x}^*)=\frac{df_j}{d\mb{x}_j}(\mb{x}^*),\forall i,j$. By contradiction, assume $\mb{x}^{*1},\mb{x}^{*2}\in \mc{S}_K$ such that $\nabla F({\mb{x}^{*1}}) = \psi_1 \underline{\mb{1}}_n$ and $\nabla F({\mb{x}^{*2}}) = \psi_2\underline{\mb{1}}_n$, implying  that the strongly convex $L_\gamma$,  $\gamma = F_{\mb{x}^{*1}} = F_{\mb{x}^{*2}}$ is tangent to the \textit{affine} faucet $\mc{S}_K$ at two points $\mb{x}^{*1}$ and $\mb{x}^{*2}$, or two level sets $L_{F_{\mb{x}^{*1}}}$ and $L_{F_{\mb{x}^{*2}}}$ are adjacent to $\mc{S}_K$. Both cases contradict the strict convexity and closedness of the level sets. This proves the lemma by contradiction.
\end{proof}

\begin{lem} \label{lem_sum}
	Let $g_l : \mathbb{R} \rightarrow \mathbb{R}, l \in \{1,2\}$ be an odd mapping, matrix $W \in \mathbb{R}^{n \times n}$ be symmetric, and $\varphi \in \mathbb{R}^n$. Then,
	\begin{align} \nonumber
	\sum_{i =1}^n \varphi_i\sum_{j =1}^n W_{ij} g_2(g_1(\varphi_j) &-g_1(\varphi_i)) = \\
	- \frac{1}{2} \sum_{i,j=1}^n W_{ij}(\varphi_j-\varphi_i) &g_2(g_1(\varphi_j)-g_1(\varphi_i)).
	\end{align}
\end{lem}
\begin{proof}
    For every $i,j$, $W_{ij} = W_{ji}$, and $g_l(x) = -g_l(-x), l \in \{1,2\}$. Thus, we have,
 	\begin{align} \nonumber
	\varphi_i W_{ij} & g_2(g_1(\varphi_j) -g_1(\varphi_i)) + \varphi_j W_{ji}  g_2(g_1(\varphi_i) -g_1(\varphi_j))\\ \nonumber
	& = W_{ij}(\varphi_i-\varphi_j) g_2(g_1(\varphi_j)-g_1(\varphi_i)) \\
	& = W_{ji}(\varphi_j-\varphi_i) g_2(g_1(\varphi_i)-g_1(\varphi_j)).
	\end{align}
    and the proof follows. 
\end{proof}

We borrow results on nonsmooth analysis and set-value notions  from 
\cite{cortes2008discontinuous}, and skip the details due to space limitation. 
Define the \textit{generalized gradient} $\partial h:\mathbb{R}^n \rightarrow \mc{B}\{\mathbb{R}\}$ for a nonsmooth function $h:\mathbb{R}^n \rightarrow \mathbb{R}$ as,
\begin{align}
    \partial h(\mb{x})= co\{\lim_{i\rightarrow \infty} \nabla h(\mb{x}_i): \mb{x}_i \rightarrow \mb{x}, \mb{x}_i \notin \Omega_h \cup S \}
\end{align}
with \textit{co} denoting convex hull, $S \subset \mathbb{R}^n$ as a zero Lebesgue measure set, and $\Omega_h \in \mathbb{R}^n$ as the set of non-differentiable points in the domain of $f$. If $h$ is \textit{locally Lipschitz}, then $\partial h(x)$ is nonempty, compact, and convex, and  $\partial h:\mathbb{R}^n\rightarrow \mc{B}\{\mathbb{R}\}$, $\mb{x} \mapsto \partial h(\mb{x})$, is
upper semi-continuous and locally bounded. Then, the \textit{set-valued Lie-derivative} $\mc{L}_\mc{H} h : \mathbb{R}^n  \rightrightarrows  \mathbb{R}$ of function $h$ with respect to the dynamics $\dot{\mb{x}} \in \partial \mc{H}(\mb{x})$ is,
\begin{align}
     \mc{L}_\mc{H} f =  \{\eta \in \mathbb{R}| \exists \nu \in \mc{H}(\mb{x})~s.t.~ \zeta^\top \nu = \eta,~\forall \zeta \in \partial f(\mb{x})\}
\end{align}
We use this for nonsmooth Lyapunov analysis in Section~\ref{sec_conv}.

\section{The Proposed Solution} \label{sec_solution}
Consider linear dynamics to solve~\eqref{eq_dra} over undirected graphs (e.g., Laplacian-gradient model  in \cite{cherukuri2015distributed}),  
\begin{eqnarray}
\dot{\mb{x}}_i = -\eta_1\sum_{j \in \mc{N}_i} W_{ij} \left(\frac{d f_i}{d \mb{x}_i}-\frac{d f_j}{d \mb{x}_j}\right),
\label{eq_sol_lin}
\end{eqnarray}
where $\eta_1>0$ and we assume \textit{symmetric} weight matrix $W$ with $W_{ij}\geq0$. Note that for switching networks the RHS of~\eqref{eq_sol_lin} is discontinuous and the dynamics indeed represents a differential inclusion $\dot{\mb{x}} \in \partial \mc{H}(\mb{x})$. Throughout the paper for notation simplicity we use equality instead of "$\in$". 
By ideas from finite-time consensus \cite{ zuo2016distributed,taes2020finite}, to  reach faster convergence for $|{\frac{d f_i}{d \mb{x}_i}-\frac{d f_j}{d \mb{x}_j}|<1}$,  \eqref{eq_sol_lin} is modified  as,
\begin{align}
\dot{\mb{x}}_i = -\eta_1\sum_{j \in \mc{N}_i} W_{ij} \mbox{sgn}^{v}\left(\frac{d f_i}{d \mb{x}_i}-\frac{d f_j}{d \mb{x}_j}\right),
\label{eq_sol_sgn1}
\end{align}
where $0<v<1$ and 
$\mbox{sgn}^{v}(x)=x|x|^{v-1}$ which 
is non-Lipschitz at $0$  (for $0<v<1$). In general, non-Lipschitz dynamics  in the form ${\dot{\mb{x}}_i = -\eta_1\sum_{j \in \mc{N}_i} W_{ij} \mbox{sgn}^{v}( \mb{x}_i-\mb{x}_j)}$ is proved to reach finite-time convergence \cite{ parsegov2013fixed} and faster than linear case close to the equilibrium (because $|\mbox{sgn}^{v}(x)|>|x|$ for $|x|<1$). Further, consensus is in finite-time \cite{ parsegov2013fixed,taes2020finite}, however, with slow rate in regions far from the equilibrium (because  $|\mbox{sgn}^{v}(x)|<|x|$ for $|x|>1$). 
To overcome this, in fixed-time consensus protocols \cite{parsegov2013fixed},  typically a second term is added as $\mbox{sgn}^{v_2}(x)$ with $v_2>1$. Having $|\mbox{sgn}^{v_2}(x)|>|x|$ for $|x|>1$ implies faster convergence rate than the linear case for states far from the consensus equilibrium. Therefore, the combination of the two gives faster convergence tunable by parameters $v_1$, $v_2$. Therefore, this work proposes nonlinear $1$st-order dynamics to solve problem~\eqref{eq_dra} via,

\small
\begin{align}
\dot{\mb{x}}_i =  -  \sum_{j \in \mc{N}_i} W_{ij}\Bigl( \eta_1\mbox{sgn}^{v_1}(\frac{d f_i}{d \mb{x}_i}-\frac{d f_j}{d \mb{x}_j}) 
+  \eta_2\mbox{sgn}^{v_2}(\frac{d f_i}{d \mb{x}_i}-\frac{d f_j}{d \mb{x}_j})\Bigr)
\label{eq_solution0}
\end{align}\normalsize
with $0<v_1<1<v_2$, $0<\eta_2,\eta_1$, and $W_{ij}=W_{ji}\geq 0$, where $|\mbox{sgn}^{v_2}(x)+\mbox{sgn}^{v_1}(x)|>|x|$; therefore, for any $\frac{d f_i}{d \mb{x}_i},\frac{d f_j}{d \mb{x}_j}$,  $|\dot{x}_i|$ is greater under dynamics \eqref{eq_solution0} (compared to the linear case). Unlike  \cite{boyd2006optimal,shames2011accelerated} we do not assume $W$ to be (doubly) stochastic. Further, the convergence rate can be tuned by $v_1$, $v_2$. 
Next, consider quantized links\footnote{The proposed model~\eqref{eq_solution1} represents a system dynamics whose RHS changes in discrete-time, therefore, a hybrid state $\zeta=(\mb{x},\tau)$ with $\tau$ as the timer state and ${p:t \in \mathbb{R}_{\geq 0} \rightarrow P=\{1,2,\dots,\overline{p}\}}$ as the  index of the switching network~$\mc{G}_p$. 
Then, the switching flow map is $\mc{F}:{\dot{p} = 0,}{\dot{\tau} \in [0,\frac{1}{m\omega}]}$ along with the dynamics of $\dot{\mb{x}}$ in \eqref{eq_solution1} and domain set ${\zeta =(\mb{x},p,\tau) \in \mc{C} = R^{n} \times P \times [0,1]}$. Then, 
the \textit{jump map} is $\mc{J}:~\mb{x}^+ = \mb{x}, p^+ \in P, \tau^+=0$ over the jump domain set $\zeta\in \mc{D} = R^n \times P \times \{1\}$, implying that the hybrid system jumps to a new mode $p \in P$ whenever $\zeta \in \mc{D}$ with the interval length depending on the timer rate $\dot{\tau}$ for each mode $p$. For example, for the minimum-length switching time-interval $m\omega$, the rate is $\dot{\tau}=\frac{1}{m\omega }$. This simply means that after $m$ (and in general $m_p\geq m$)
periods $\omega$, the jump occurs as $\tau=1$ 
and $p$ switches to a new mode (a new topology $\mc{G}_p$), $\tau$ starts over, while the state $\mb{x}$ is continuous and unchanged. This is known as piece-wise constant jump mapping 
and satisfies the so-called \textit{Basic Assumptions} for stability, see  \cite{goebel2009hybrid,dsvm} and references therein.},  

\begin{align}  \nonumber
\dot{\mb{x}}_i =  &-  \sum_{j \in \mc{N}_i} W_{ij}(t,p)\Bigl( \eta_1\mbox{sgn}^{v_1}\big(q\big(\frac{d f_i}{d \mb{x}_i}\big)-q\big(\frac{d f_j}{d \mb{x}_j}\big)\big)\\
&+  \eta_2\mbox{sgn}^{v_2}\big(q\big(\frac{d f_i}{d \mb{x}_i}\big)-q\big(\frac{d f_j}{d \mb{x}_j}\big)\Bigr)
\label{eq_solution1}
\end{align}
where function $q_\rho(z)$ represents the data-quantization (of the shared information $\frac{d f_j}{d \mb{x}_j}$) as \cite{wei2018nonlinear,guo2013consensus}, 
\begin{align} 
	q_\rho(z) &= \mbox{sgn}(z) \exp(q_{u}(\log(|z|,\rho))), \label{eq_quan_log}
\end{align}
with $q_{u}(z) = \rho \left[ \frac{z}{\rho}\right]$ as the uniform quantizer ($[\cdot]$ denotes rounding to the nearest integer). The strongly sign-preserving odd function $q_\rho(\cdot)$ represents logarithmic data-quantization with level $\rho$, where  $(1-\frac{\rho}{2})z\leq q_\rho(z) \leq (1+\frac{\rho}{2})z$. 
\textit{For notation simplicity in the rest of the paper  we use  $\psi_i =: \frac{d f_i}{d \mb{x}_i}$ and  $W_{ij}$ instead of  $W_{ij}(p,t)$.} In \eqref{eq_solution1}, every agent knows its own state and objective along with information of its neighbors over $\mc{G}$. 
We consider \textit{periodic} communication, say every $\omega$ sec, with sufficiently small  $\omega$ (similar to \cite[Theorem~10]{KIA2015254}) with hybrid state as in \cite{dsvm,goebel2009hybrid}. Every agent $i$ shares $\psi_i = \frac{d f_i}{d \mb{x}_i}$ with $j \in \mc{N}_i(t)$ via (logarithmic) quantized channels, where the agent $j$ receives $q_\rho(\frac{d f_i}{d \mb{x}_i})$. 
Following the strict convexity of the level-sets of $F$ and \cite[Proposition S2]{cortes2008discontinuous}, for initialization  point $\mb{x}_0 \in \mc{S}_K$ and its level set $L_{F(\mb{x}_0)}$,
the solution in $ L_{F(\mb{x}_0)} \cap \mc{S}_K$ under the \textit{differential inclusion} \eqref{eq_solution1}  exists and is locally bounded, upper semi-continuous, with non-empty, compact, and convex values. We use this along with  $\mc{L}_\mc{H}$  referring  to  \textit{Lie-derivative} with respect to  $\dot{\mb{x}} = \partial \mc{H}(\mb{x})$ given by  differential inclusion \eqref{eq_solution1} in the rest of the paper.







\begin{lem} \label{lem_feasible}
	Consider a feasible $\mb{x}_0 \in \mc{S}_K$. For any symmetric weight matrix $W$, solution $x(t)$  keeps its feasibility (sum-preserving) under dynamics \eqref{eq_solution1} for $t>0$.
\end{lem}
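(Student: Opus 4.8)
The plan is to show that the quantity $\sum_{i=1}^n \mb{x}_i(t)$ is a conserved quantity of the dynamics \eqref{eq_solution1}, so that if it starts at $K$ it remains at $K$ for all $t$. First I would differentiate the sum along trajectories:
\begin{eqnarray} \nonumber
\frac{d}{dt}\sum_{i=1}^n \mb{x}_i = \sum_{i=1}^n \dot{\mb{x}}_i = -\sum_{i=1}^n \sum_{j \in \mc{N}_i} W_{ij}\Bigl(\eta_1\mbox{sgn}^{v_1}(\psi_i-\psi_j) + \eta_2\mbox{sgn}^{v_2}(\psi_i-\psi_j)\Bigr).
\end{eqnarray}
The key observation is that the map $x \mapsto \mbox{sgn}^{v}(x) = x|x|^{v-1}$ is an odd function for any exponent $v$, so the summand associated with the ordered pair $(i,j)$ is the negative of the summand associated with $(j,i)$; combined with the symmetry $W_{ij}=W_{ji}$, the double sum cancels termwise.

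To make this precise I would rewrite the double sum over $i$ and $j\in\mc{N}_i(t)$ as an unrestricted double sum over all $i,j \in \{1,\dots,n\}$, using $W_{ij}=0$ whenever $(i,j)\notin\mc{E}(t)$. Then for each unordered pair $\{i,j\}$ the two ordered contributions are $-W_{ij}(\eta_1\mbox{sgn}^{v_1}(\psi_i-\psi_j)+\eta_2\mbox{sgn}^{v_2}(\psi_i-\psi_j))$ and $-W_{ji}(\eta_1\mbox{sgn}^{v_1}(\psi_j-\psi_i)+\eta_2\mbox{sgn}^{v_2}(\psi_j-\psi_i))$, which sum to zero by oddness of $\mbox{sgn}^{v_1}$ and $\mbox{sgn}^{v_2}$ and $W_{ij}=W_{ji}$. (Equivalently, one can invoke Lemma~\ref{lem_sum} with $\psi$ replaced by the all-ones vector, or simply observe that the right-hand side is $-\underline{\mb{1}}_n^\top$ times an antisymmetric-in-$(i,j)$ array.) Hence $\frac{d}{dt}\sum_{i=1}^n \mb{x}_i(t) = 0$, so $\sum_{i=1}^n \mb{x}_i(t) = \sum_{i=1}^n \mb{x}_i(0) = K$ for all $t$, i.e. $\mb{x}(t)\in\mc{S}_K$.

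There is no serious obstacle here; the only point that deserves a word of care is that the vector field in \eqref{eq_solution1} is non-Lipschitz (indeed only continuous) at points where some $\psi_i=\psi_j$, so one should note that solutions are still well-defined — e.g. in the Carathéodory/Filippov sense, or simply as absolutely continuous functions satisfying the equation almost everywhere — and the conservation argument applies to any such solution, since the cancellation is pointwise in $t$ wherever $\dot{\mb{x}}$ exists and $\sum_i \mb{x}_i$ is absolutely continuous. I would state this briefly and then conclude.
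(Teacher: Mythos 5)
Your proof is correct and follows essentially the same route as the paper's: differentiate $\sum_i \mb{x}_i$ along trajectories, then use the oddness of $\mbox{sgn}^{v}(\cdot)$ together with the symmetry $W_{ij}=W_{ji}$ to cancel the double sum termwise, so the total resource is conserved. Your closing remark on well-posedness of solutions at the non-Lipschitz points is a reasonable extra precaution that the paper's proof omits, but the core argument is identical.
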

\begin{proof}
	$\mb{x}_0 \in \mc{S}_K$ implies that $\sum_{i =1}^n {\mb{x}}_i(0)=K$. Then,
	\begin{align}
	\sum_{i =1}^n \dot{\mb{x}}_i 
	= &-  \sum_{i,j=1}^n  W_{ij}\Bigl( \eta_1\mbox{sgn}^{v_1}(q_\rho(\psi_i)-q_\rho(\psi_j)) \nonumber \\
	&+  \eta_2\mbox{sgn}^{v_2}(q_\rho(\psi_i)-q_\rho(\psi_j))\Bigr).
	\end{align}
	Note that  $\mbox{sgn}^{v}(q_\rho(\psi_j)-q_\rho(\psi_i)) = - \mbox{sgn}^{v}(q_\rho(\psi_i)-q_\rho(\psi_j))$ and $W_{ij}=W_{ji}$, therefore
	$\frac{d}{dt}\sum_{i =1}^n\mb{x}_i=\sum_{i =1}^n \dot{\mb{x}}_i = 0$.
\end{proof}

This lemma implies that initializing from a feasible $\mb{x}_0$, the solution under \eqref{eq_solution1} remains feasible \textit{at all times}. This \textit{anytime feasibility}, e.g., in generator coordination, outperforms \cite{kar2012distributed,dominguez2012decentralized,doan_tac} as the produced power does not deviate from the demand constraint, and  the feasibility set $\mc{S}_K$ is \textit{positively invariant } under dynamics~\eqref{eq_solution1}.
A feasible initialization could be simply $\mb{x}_i(0)=\frac{K}{n}$ at all agents (if within the box constraints) or via the algorithm in \cite{cherukuri2015distributed}. However, in many problems, the feasibility is initially satisfied; for example, in the economic dispatch, the initial energy production meets the demand (anytime-feasibility), and in the coverage control, the agents' coverage initially includes the entire convex area \cite{MiadCons} and the algorithm optimizes the production/coverage.
Unlike \cite{feng2017finite}, it is not necessary to constrain the initialization as $\sum_{i=1}^{n}\frac{\partial f_i}{\partial \mb{x}_i}(0)=0$.

\section{Convergence Analysis} \label{sec_conv}
In this section, we first characterize the unique equilibrium of dynamics~\eqref{eq_solution1} and then prove its stability.
\begin{theorem} \label{thm_tree}
	If Assumption~\ref{ass_tree} holds, the equilibrium point $\mb{x}^*$ with $\nabla F({\mb{x}^*}) = {\psi}^*\underline{\mb{1}}$ is the only equilibrium of \eqref{eq_solution1}.
\end{theorem}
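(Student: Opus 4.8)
The plan is to establish two directions: (i) the point $\mb{x}^*$ with $\nabla \mb{F}_{\mb{x}^*} = \psi^*\underline{\mb{1}}_n$ is indeed an equilibrium of \eqref{eq_solution1}, and (ii) no other equilibrium exists. Direction (i) is immediate: if all $\psi_i = \frac{df_i}{d\mb{x}_i}(\mb{x}^*)$ are equal, then every argument $\psi_i - \psi_j$ vanishes, so $\mbox{sgn}^{v_1}(0) = \mbox{sgn}^{v_2}(0) = 0$ and hence $\dot{\mb{x}}_i = 0$ for all $i$, regardless of the weights $W_{ij}$ and the connectivity structure. So the substance is direction (ii).

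For uniqueness, suppose $\mb{x}$ is an equilibrium, i.e.\ $\dot{\mb{x}}_i = 0$ for all $i$. The key idea is to exploit Lemma~\ref{lem_sum} with $g = \eta_1\mbox{sgn}^{v_1} + \eta_2\mbox{sgn}^{v_2}$ (which is odd) and the vector $\psi = (\psi_1,\dots,\psi_n)^\top$. Since each $\dot{\mb{x}}_i = 0$, multiplying the $i$-th equation by $\psi_i$ and summing gives $0 = \sum_i \psi_i \dot{\mb{x}}_i = -\sum_i \psi_i \sum_{j\in\mc{N}_i} W_{ij} g(\psi_i - \psi_j)$, which by Lemma~\ref{lem_sum} equals $-\tfrac12 \sum_{i,j} W_{ij}(\psi_i - \psi_j) g(\psi_i - \psi_j)$. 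Now observe that $x\,g(x) = \eta_1|x|^{v_1+1} + \eta_2|x|^{v_2+1} \ge 0$ with equality iff $x = 0$; since $W_{ij}\ge 0$, every term in the sum is nonnegative, so the sum vanishing forces $W_{ij}(\psi_i - \psi_j) = 0$, i.e.\ $\psi_i = \psi_j$ for every edge $(i,j)$ present in $\mc{E}(t)$ at any time $t$ in the equilibrium configuration. The remaining task is to upgrade ``equal across every active edge'' to ``equal across all pairs $i,j$.'' Here is where Assumption~\ref{ass_tree} enters: over any interval $[t_k, t_k+l_k]$, the union graph $\bigcup_t \mc{G}(t)$ contains a spanning tree, so for any two nodes $i,j$ there is a path in this union all of whose edges are active at some time in the interval; chaining the equalities $\psi_p = \psi_q$ along that path yields $\psi_i = \psi_j$. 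Hence $\nabla \mb{F}_{\mb{x}} = \alpha\underline{\mb{1}}_n$ for some scalar $\alpha$.

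Finally, I would invoke Lemma~\ref{lem_unique}: since an equilibrium starting feasible stays feasible by Lemma~\ref{lem_feasible}, $\mb{x} \in \mc{S}_K$, and Lemma~\ref{lem_unique} says there is exactly one point in $\mc{S}_K$ with gradient proportional to $\underline{\mb{1}}_n$, namely $\mb{x}^*$. Therefore $\mb{x} = \mb{x}^*$, completing the uniqueness argument.

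The main obstacle I anticipate is a subtle point in the connectivity step: strictly speaking, $\dot{\mb{x}}_i = 0$ being an ``equilibrium'' of a \emph{time-varying} system should mean the state stays put for all $t$, so the argument above must be applied at each $t$ (or, more carefully, one argues that $\psi_i = \psi_j$ holds along every edge that ever appears), and then the spanning-tree-in-union hypothesis is precisely what bridges the gap. A secondary care point is that $g$ is non-Lipschitz at $0$, but this does not affect the equilibrium characterization — only the convergence rate analysis later — since we only use the sign/magnitude property $x\,g(x) > 0$ for $x \neq 0$. I would also note in passing that feasibility of the equilibrium is needed to pin down $\mb{x}^*$ uniquely (otherwise any translate along $\underline{\mb{1}}_n$ directions is precluded only by the constraint), which is why Lemma~\ref{lem_feasible} is cited.
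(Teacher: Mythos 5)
Your proposal is correct, but it proves uniqueness by a genuinely different route than the paper. The paper argues by contradiction with an extremal-node (maximum-principle) argument: it takes $a=\argmax_i\widehat{\psi}_i$ and $b=\argmin_i\widehat{\psi}_i$ at a putative second equilibrium, uses the spanning tree in $\bigcup_{t=t_k}^{t_k+l_k}\mc{G}(t)$ to produce a path between them, and locates nodes $\alpha,\beta$ on that path whose neighbors force $\dot{\mb{x}}_\alpha<0$ and $\dot{\mb{x}}_\beta>0$ at some time in the interval, contradicting equilibrium. You instead contract the equilibrium equations against $\psi$ and invoke the symmetrization identity of Lemma~\ref{lem_sum} to get $0=-\tfrac12\sum_{i,j}W_{ij}\bigl(\eta_1|\psi_i-\psi_j|^{v_1+1}+\eta_2|\psi_i-\psi_j|^{v_2+1}\bigr)$, a sum of nonnegative terms, forcing $\psi_i=\psi_j$ across every edge ever active, and then chain along the union graph. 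This is essentially the paper's own Lyapunov computation from Theorem~\ref{thm_converg} specialized to the equilibrium set, so your route unifies the two proofs; it is also arguably tighter, since the paper's claim about the existence of the nodes $\alpha,\beta$ with the stated strict inequalities is left somewhat informal. Your closing step (pinning down the equilibrium within $\mc{S}_K$ via Lemmas~\ref{lem_feasible} and~\ref{lem_unique}) is a point the paper's proof of Theorem~\ref{thm_tree} does not spell out, and it is needed, since without the constraint the condition $\nabla\mb{F}_{\mb{x}}=\alpha\underline{\mb{1}}_n$ admits a one-parameter family of solutions. One caution: Lemma~\ref{lem_sum} as printed has a sign discrepancy (for $n=2$ the left side equals \emph{minus} the stated right side); the expression you wrote, $-\tfrac12\sum_{i,j}W_{ij}(\psi_i-\psi_j)g(\psi_i-\psi_j)$, is the correct one and matches how the lemma is actually used in the proof of Theorem~\ref{thm_converg}, so your argument stands, but you should derive it directly (by the $i\leftrightarrow j$ swap) rather than cite the lemma verbatim.
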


\begin{proof}
	For $\mb{x}^*$ with $\nabla F({\mb{x}^*}) = {\psi}^*\underline{\mb{1}}$, we have $\dot{\mb{x}}^*_i=0$ which implies that $\mb{x}^*$ is the equilibrium of  \eqref{eq_solution1}. 
	By contradiction suppose there exists another equilibrium $\widehat{\mb{x}}$ under dynamics \eqref{eq_solution1} (i.e. $\dot{\widehat{\mb{x}}} = 0$) where $\frac{d f_i}{d \mb{x}_i}\neq\frac{d f_j}{d \mb{x}_j}$ for at least two nodes $i,j$. Let $\nabla F({\widehat{\mb{x}}}) = (\widehat{\psi}_1,\dots,\widehat{\psi}_n)^\top$ and consider nodes $a = \argmax_{\lambda\in \{1,\dots,n\}}  \widehat{\psi}_{\lambda }$ and $b = \argmin_{\lambda \in \{1,\dots,n\}}  \widehat{\psi}_\lambda$. Clearly, $\widehat{\psi}_a>\widehat{\psi}_b$. From Assumption~\ref{ass_tree} and Definition~\ref{def_tree}, there is (at least) one path between $a$ and $b$ in the union graph  $\bigcup_{t=t_k}^{t_k+l_k}\mc{G}(t)$, which includes two nodes $\alpha,\beta$ such that  $\widehat{\psi}_{\alpha}\geq \widehat{\psi}_{\mc{N}_{\alpha}}$ and $\widehat{\psi}_{\beta}\leq \widehat{\psi}_{\mc{N}_{\beta}}$ with strict inequality for (at least) one node in ${\mc{N}_{\alpha}}$  and $\mc{N}_{\beta}$. Both fixed-time  and quantization are sign-preserving odd functions, and thus, in a sub-domain of $[t_k,t_k+l_k]$, $\sum_{j \in \mc{N}_{\alpha}} \mbox{sgn}^{v}(q_\rho(\widehat{\psi}_{\alpha})-q_\rho(\widehat{\psi}_{j}))>0$ and  $\sum_{j \in \mc{N}_{\beta}} \mbox{sgn}^{v}(q_\rho(\widehat{\psi}_{\beta})-q_\rho(\widehat{\psi}_{j}))<0$ from dynamics \eqref{eq_solution1}. Therefore, $\dot{\widehat{\mb{x}}}_{\alpha} < 0$ and   $\dot{\widehat{\mb{x}}}_{\beta} > 0$ contradicting the equilibrium condition for~$\widehat{\mb{x}}$. This  proves the theorem.
\end{proof}

To illustrate more, assume that Assumption~\ref{ass_tree} does not hold; for example, the network has two separate connected components with nodes $1,\dots, m$ in component $\mc{G}_1$ and nodes $m+1,\dots, n$ in component $\mc{G}_2$. 
Then, for $\mb{x}^*$ as the equilibrium of~\eqref{eq_solution1}, $\frac{d f_i}{d \mb{x}_i}(\mb{x}^*_i)=\frac{d f_j}{d \mb{x}_j}(\mb{x}^*_j)=\psi^{*1}$ for $i,j \in \{1,\dots,m\}$ and $\frac{d f_i}{d \mb{x}_i}(\mb{x}^*_i)=\frac{d f_j}{d \mb{x}_j}(\mb{x}^*_j)=\psi^{*2}$ for $i,j \in \{m+1,\dots,n\}$ where in general $\psi^{*1} \neq \psi^{*2}$. Therefore, for such cases the equilibrium in general may differ from the optimal solution of the problem~\eqref{eq_dra} stated in Lemma~\ref{lem_unique}.

\begin{rem} \label{rem_unique}
	In the virtue of Lemma~\ref{lem_feasible}, the solution $\mb{x}(t)$ remains feasible under dynamics~\eqref{eq_solution1}, i.e., $\mb{x}(t) \in \mc{S}_K$ for  $t>0$. Then, following Lemma~\ref{lem_unique}, for initial value $\mb{x}_0 \in \mc{S}_K$  there is only one equilibrium $\mb{x}^*$ satisfying $\nabla F({\mb{x}^*}) = {\psi}^*\underline{\mb{1}}_n$.
\end{rem}
	
\begin{theorem} \label{thm_converg}
	Under Assumptions~\ref{ass_tree} and \ref{ass_conv} and starting from a feasible state  $\mb{x}_0 \in \mc{S}_K$, dynamics \eqref{eq_solution1} solves the resource allocation problem \eqref{eq_dra} (with feasibility at all times).
\end{theorem}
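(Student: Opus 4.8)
\emph{Proof plan.} The idea is to reduce the claim to a single Lyapunov argument. By Lemma~\ref{lem_feasible}, feasibility is preserved, so $\mb{x}(t)\in\mc{S}_K$ for all $t\ge 0$; by Remark~\ref{rem_unique} there is a unique point $\mb{x}^*\in\mc{S}_K$ with $\nabla\mb{F}_{\mb{x}^*}={\psi}^*\underline{\mb{1}}_n$, and by Lemma~\ref{lem_equilibria} this $\mb{x}^*$ is exactly the optimal allocation of \eqref{eq_dra}. Hence it is enough to show that the trajectory of \eqref{eq_solution1} converges to $\mb{x}^*$. I would take the optimality gap $V(\mb{x})=F(\mb{x})-F(\mb{x}^*)$ as Lyapunov candidate: on $\mc{S}_K$ one has $V\ge 0$ with equality only at $\mb{x}^*$ (strict convexity), and, again by strict convexity together with the existence of the minimizer, the sublevel sets $\{\mb{x}\in\mc{S}_K:V(\mb{x})\le\gamma\}$ are compact, which gives boundedness of $\mb{x}(t)$ once $V$ is shown to be nonincreasing.

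Next I would differentiate $V$ along \eqref{eq_solution1}. With $\psi_i=\frac{df_i}{d\mb{x}_i}$ and the odd map $g(y)=\eta_1\mbox{sgn}^{v_1}(y)+\eta_2\mbox{sgn}^{v_2}(y)$, Lemma~\ref{lem_sum} applied to the symmetric matrix $W(t)$ gives
\[
\dot V=\sum_{i=1}^n\psi_i\dot{\mb{x}}_i=-\tfrac12\sum_{i,j=1}^n W_{ij}(t)\bigl(\eta_1|\psi_i-\psi_j|^{1+v_1}+\eta_2|\psi_i-\psi_j|^{1+v_2}\bigr)\le 0,
\]
because $y\,\mbox{sgn}^{v}(y)=|y|^{1+v}\ge 0$. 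Thus $V$ is nonincreasing along solutions, $\mb{x}(t)$ remains in a compact set, and $V(t)\downarrow V_\infty\ge 0$. (Since \eqref{eq_solution1} is non-Lipschitz, solutions are taken in the Filippov/Carath\'eodory sense and $V$ is absolutely continuous, so the identity above holds for almost every $t$, which suffices.)

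The crux is then an invariance-type step adapted to the time-varying, possibly instantaneously disconnected graph. Integrating the inequality yields $\int_0^\infty\sum_{i,j}W_{ij}(t)|\psi_i(t)-\psi_j(t)|^{1+v_1}\,dt\le \frac{2}{\eta_1}(V(0)-V_\infty)<\infty$, and similarly for the $v_2$ term. Because $\mb{x}(t)$ is bounded, each $\psi_i(\cdot)$ is uniformly continuous and (assuming, as is standard, a uniform positive lower bound on the weights of active edges) the tail of this convergent integral over the disjoint intervals $[t_k,t_k+l_k]$ must vanish as $k\to\infty$. Invoking Assumption~\ref{ass_tree} exactly as in the proof of Theorem~\ref{thm_tree}---over each $[t_k,t_k+l_k]$ the union graph contains a spanning tree, so any two agents are joined by a path each of whose edges is active at some instant of that interval---and propagating the vanishing edge-disagreements along this path (using uniform continuity of $\psi(\cdot)$ and the bound $l_k<\infty$), one concludes $\psi_i(t)-\psi_j(t)\to 0$ for every pair $i,j$. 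This is the step I expect to be the main obstacle, since LaSalle's invariance principle does not apply verbatim to switched/time-varying systems and the union-spanning-tree condition has to be exploited in the integral/averaged sense familiar from consensus-over-switching-network analyses.

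Finally, the trajectory being bounded, it has accumulation points, and any accumulation point $\bar{\mb{x}}$ lies in $\mc{S}_K$ and, by the previous step, satisfies $\frac{df_i}{d\mb{x}_i}(\bar{\mb{x}})=\frac{df_j}{d\mb{x}_j}(\bar{\mb{x}})$ for all $i,j$, i.e.\ $\nabla\mb{F}_{\bar{\mb{x}}}$ is colinear with $\underline{\mb{1}}_n$. By Lemma~\ref{lem_unique} the only such point of $\mc{S}_K$ is $\mb{x}^*$; hence the accumulation point is unique and $\mb{x}(t)\to\mb{x}^*$, which by Lemma~\ref{lem_equilibria} is the optimizer of \eqref{eq_dra}. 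This would complete the proof.
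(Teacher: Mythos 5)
Your proposal follows the same core route as the paper: preserve feasibility via Lemma~\ref{lem_feasible}, take the optimality gap $\overline{F}(\mb{x})=F(\mb{x})-F^*$ as the Lyapunov function, apply Lemma~\ref{lem_sum} to get $\dot{\overline{F}}=-\tfrac{1}{2}\sum_{i,j}W_{ij}\bigl(\eta_1|\psi_i-\psi_j|^{1+v_1}+\eta_2|\psi_i-\psi_j|^{1+v_2}\bigr)\le 0$, and then identify the limit with the unique point of $\mc{S}_K$ at which $\nabla\mb{F}$ is a multiple of $\underline{\mb{1}}_n$ via Lemma~\ref{lem_unique}, Remark~\ref{rem_unique}, and Lemma~\ref{lem_equilibria}. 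So in outline you and the paper agree.

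Where you genuinely diverge is in the final step, and your version is the more careful one. The paper asserts $\dot{\overline{F}}(\mb{x})=0\iff\psi_i=\psi_j$ for all $i,j$ and then concludes convergence by appealing to the uniqueness of that point; but that equivalence only holds at instants when $\mc{G}(t)$ is connected, whereas under Assumption~\ref{ass_tree} the instantaneous graph may be disconnected (or even edgeless), so $\dot{\overline{F}}=0$ can occur far from $\mb{x}^*$. Moreover, passing from ``$\dot{\overline{F}}\le 0$ with a characterized zero set'' to actual convergence requires an invariance argument, and LaSalle does not apply verbatim to the switched, time-varying setting. Your integral bound $\int_0^\infty\sum_{i,j}W_{ij}(t)|\psi_i-\psi_j|^{1+v_1}\,dt<\infty$, combined with uniform continuity of $\psi(\cdot)$ on the compact sublevel set, a uniform lower bound on active edge weights, and propagation of vanishing disagreements along the spanning-tree paths guaranteed over each $[t_k,t_k+l_k]$, is precisely what is needed to close this gap; it is the standard device from consensus-over-switching-networks analyses. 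The one caveat is that the uniform positive lower bound on active weights is an assumption you add that the paper does not state explicitly (it only requires $W_{ij}>0$ on active edges), so you should flag it as such; without some such non-degeneracy condition the integral argument can fail. Your remark about interpreting solutions in the Carath\'eodory/Filippov sense, given the non-Lipschitz right-hand side, is also a point the paper passes over silently.
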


\begin{proof}	
Following Lemma~\ref{lem_equilibria}, for $\mb{x}^*$ as the optimal solution of problem~\eqref{eq_dra}, $\nabla F({\mb{x}^*}) = {\psi}^*\underline{\mb{1}}_n$.  
From Lemma~\ref{lem_feasible} $\mb{x}_0 \in \mc{S}_K$ implies solution feasibility  under  \eqref{eq_solution1}, and therefore, $\sum_{i=1}^n \mb{x}^*_i=K$. Let $F^*=F(x^*)$ and $\overline{F}(\mb{x})=F(\mb{x})-F^*$ denoting the residual with respect to optimal value. Consider this positive $\overline{F}(\mb{x})$ as the Lyapunov function with unique equilibrium $\overline{F}(\mb{x}^*)=0$. From \cite[Proposition~10]{cortes2008discontinuous},
for this nonsmooth, regular, and locally Lipschitz Lyapunov function $\overline{F}(\mb{x})$,  its derivative satisfies $\partial \overline{F}(\mb{x}(t))= \mc{L}_\mc{H} \overline{F}(\mb{x}(t))$ \cite[Proposition~10]{cortes2008discontinuous} with $\mc{H}$ referring to dynamics \eqref{eq_solution1}.
We show here that this non-negative Lyapunov function is monotonically non-increasing under dynamics \eqref{eq_solution1}. We have,

\begin{align}\nonumber
\partial \overline{F}(\mb{x}) 
=  &\sum_{i =1}^n \psi_i\Bigl(- \sum_{j \in \mc{N}_i} W_{ij}\Bigl(\eta_1\mbox{sgn}^{v_1}(q_\rho(\psi_i)-q_\rho(\psi_j)) \nonumber \\ &+  \eta_2\mbox{sgn}^{v_2}(q_\rho(\psi_i)-q_\rho(\psi_j))\Bigr)\Bigr). \nonumber
\end{align}
Then, in Lemma~\ref{lem_sum} set  $g_2(\cdot)$ as $\mbox{sgn}^{v}(\cdot)$  and  $g_1(\cdot)$ as $q_\rho(\cdot)$,
\begin{align} \label{eq_Frate}
\partial \overline{F}(\mb{x}) =  &-\frac{\eta_1}{2}\sum_{i,j =1}^n  W_{ij}|q_\rho(\psi_i)-q_\rho(\psi_j)|^{v_1+1} \nonumber \\ &- \frac{\eta_2}{2}\sum_{i,j =1}^n  W_{ij} |q_\rho(\psi_i)-q_\rho(\psi_j)|^{v_2+1}.
\end{align}
Therefore, $\partial \overline{F}(\mb{x}) \leq0$.
Following Lemma~\ref{lem_unique},  Remark~\ref{rem_unique}, and Theorem~\ref{thm_tree} the unique point $\mb{x}^*$ satisfying $\nabla F \in \mbox{span}\{\mb{1}_n\}$, is the unique equilibrium of  dynamics~\eqref{eq_solution1}, and, based on Lemma~\ref{lem_equilibria}, it is the optimal solution to the problem \eqref{eq_dra}. For any initial value $\mb{x}_0 \in \mc{S}_K$, the compact and closed (affine) solution set $\{L_{F(\mb{x}_0)} \cap \mc{S}_K\}$ is anytime feasible and positively invariant under \eqref{eq_solution1}. Thus, using LaSalle invariance principle for differential inclusions \cite[Theorem~2.1]{cherukuri2015distributed}, the solution converges to the largest invariant set $\mc{I}$ contained in $\{\mb{x} \in L_{F(\mb{x}_0)} \cap \mc{S}_K| 0 \in \mc{L}_\mc{H} \overline{F}(\mb{x}(t))\}$. Since $\mc{I} = \{\mb{x}^*\}$,    $\overline{F}\leq 0$, and $ \max \mc{L}_\mc{H} \overline{F}(\mb{x}(t)) < 0$ for all $\mb{x} \in \mc{S}_K \setminus \mc{I}$, dynamics \eqref{eq_solution1} globally asymptotically converges to $\mc{I} = \{\mb{x}^*\}$  \cite[Theorem~1]{cortes2008discontinuous}. This completes the proof.
\end{proof}
Recall that in Lemma~\ref{lem_sum} the oddness ensures anytime-feasibility.
The connectivity requirement in Assumption~\ref{ass_tree} gives the unique optimal state $\mb{x}^*$ (with $\nabla F({\mb{x}^*}) = {\psi}^*\underline{\mb{1}}$) in Theorem~\ref{thm_tree}, while Theorem~\ref{thm_converg} proves convergence to $\mb{x}^*$. 
\begin{rem}
The following gives an estimate of the convergence rate of Eq. \eqref{eq_Frate}. 
\begin{align} \nonumber
    |q_\rho(\psi_i)-q_\rho(\psi_j)|^{v_1+1} + &|q_\rho(\psi_i)-q_\rho(\psi_j)|^{v_2+1} \\ 
    &\geq |q_\rho(\psi_i)-q_\rho(\psi_j)|^2
\end{align}
where the RHS of the above represents the convergence rate of the linear (and linear quantized) protocols \cite{li2021quantized,boyd2006optimal}. Thus, the dynamics \eqref{eq_solution1} is faster than its linear counterparts. 
\end{rem}
\begin{rem}
The results of this paper can be extended to consider \textit{saturation} effects  \cite{ wei2018nonlinear}. For example, in case of actuator saturation one may substitute $\mbox{sgn}(x)$ or $\mbox{sat}_\kappa(\cdot)$ instead of $\mbox{sgn}^{v}(\cdot)$ in dynamics~\eqref{eq_solution1}, where $\mbox{sat}_\kappa(x) = x$ for $ -\kappa \leq x \leq \kappa$ and $\kappa \mbox{sgn}(x)$ otherwise.
Recall that the proofs of the given theorems and lemmas use only having non-zero derivative at zero, oddness, and sign-preserving property of $\mbox{sgn}^{v}(\cdot)$, which are also true for $\mbox{sat}_\kappa(\cdot)$ function. Therefore, the uniqueness, feasibility, and convergence results can be restated for general strongly sign-preserving nonlinearities on the agent's dynamics and communications. 
\end{rem}

\section{Numerical Simulations} \label{sec_sim}
For the simulations, 
smooth penalty $([u]^+)^2$ \cite{nesterov1998introductory} (with $\epsilon = 1$) for the box constraints is used  to satisfy Assumption~\ref{ass_conv}.


\subsection{A Comparison Study}
Consider the strictly-convex costs  as \cite{doan2017scl},
\begin{eqnarray} \label{eq_F1}
f_i(\mb{x}_i) = b_i(\mb{x}_i-a_i)^4,
\end{eqnarray}
with random coefficients $b_i \in (0,4]$, $a_i \in [-2,4]$, and box constraints $0\leq x_i\leq 5$. The random initial states satisfy $\sum_{i=1}^n \mb{x}_i(0) = K=20$ (as in Lemma~\ref{lem_feasible}). 
The multi-agent network is a cycle of $n=10$ nodes with random \textit{stochastic} link weights (this is required by \cite{boyd2006optimal,shames2011accelerated} and only for the sake of comparison). In Fig.~\ref{fig_compare}, the convergence of the dynamics \eqref{eq_solution1} is compared with linear \cite{boyd2006optimal}, accelerated linear ($\beta = 0.6$) \cite{shames2011accelerated}, quantized linear (with all-time triggered communications)  \cite{li2021quantized}, finite-time \cite{chen2016distributed}, and node-based fixed-time \cite{chen2018fixed} (with time-period $\omega = \num{2e-5}$, $\eta=1$, and $v_1=0.1$, $v_2=1.6$ in dynamics~\eqref{eq_solution1}).
\begin{figure}[]
	\raggedleft 
	\includegraphics[width=3.2in]{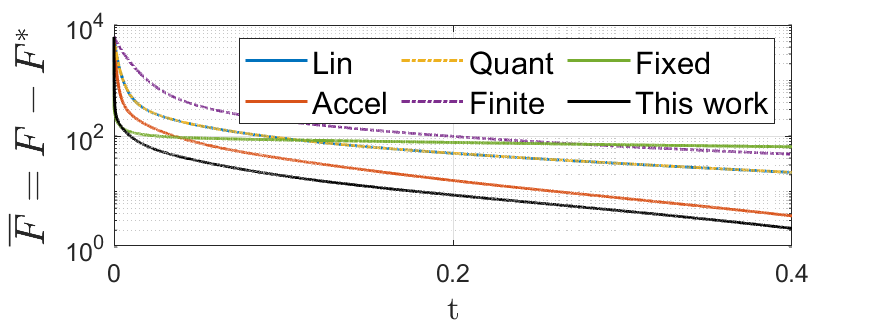}
	\caption{This figure compares the time-evolution of the residual under the proposed quantized dynamics (for quadratic cost) with some literature. } 
	\label{fig_compare} \vspace{-0.3cm}
\end{figure}


\subsection{Simulation over Weakly Connected Sparse Networks} 
We consider sparse networks of $n=100$ agents 
every $0.05$ sec switching between Scale-Free networks, 
none of which includes a spanning tree, while $\bigcup_{t=t_k}^{t_k+0.2}\mc{G}(t)$ is connected, i.e., $l_k=0.5$  ($100$ time-periods with $\omega = \num{5e-3}$) in Assumption~\ref{ass_tree}. The link weights are random and \textit{non-stochastic}.  Similar to \cite{boyd2006optimal}, consider strictly-convex objectives as,
\begin{eqnarray} \label{eq_F2}
f_i(\mb{x}_i) = \frac{1}{2}a_i(\mb{x}_i-c_i)^2 + \log(1+\exp(b_i(\mb{x}_i-d_i))),
\end{eqnarray}
with random coefficients $a_i \in (0,0.1]$, $b_i \in [-0.01,0.01]$, $c_i,d_i \in [-0.5,0.5]$ and box constraint $3 \leq \mb{x}_i \leq 7$.
The time-evolution of the states and absolute residual cost  $\overline{F}(\mb{x})$ is shown in Fig.~\ref{fig_sim}  with random initialization $\sum_{i=1}^{n} \mb{x}_i(0)= K = 500$, $\eta=1$, $\rho=\num{5e-4}$, and $v_1=0.3,~v_2=1.6$.  
\begin{figure}[]
	\raggedleft 
	\includegraphics[width=3.2in]{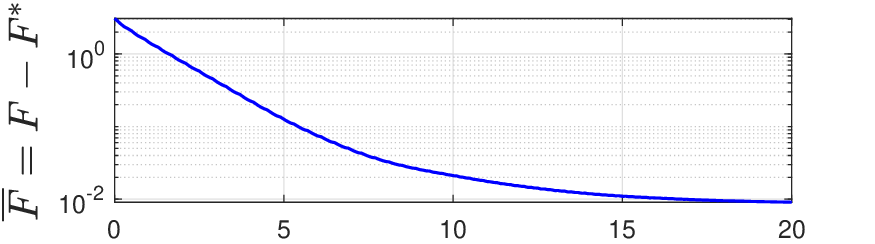}\vspace{0.1cm}
	\includegraphics[width=3.2in]{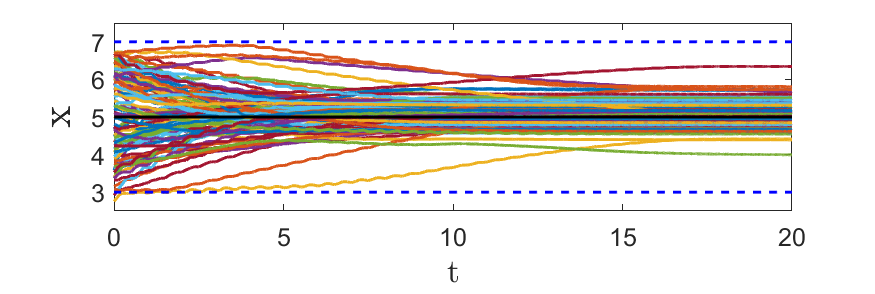}
	\caption{ Time-evolution of the residual (bottom) and the states (top) under dynamics \eqref{eq_solution1}. The box constraint and feasibility constraint (average of state values) are respectively shown by (dashed) blue and (solid) black  lines.  }
	\label{fig_sim} \vspace{-0.6cm}
\end{figure}

\section{Conclusion} \label{sec_conclusion}
This work provides a distributed nonlinear $1$st-order solution for resource allocation over dynamic undirected networks subject to (logarithmic) quantized data transmission, with convergence proved over sparse (uniformly-connected) networks.
The explicit discretization of \eqref{eq_solution1} (e.g., via Euler Forward method) can be used in real implementations assuming certain lower-bound on the sampling rate. For \textit{uniform} quantization with $ \frac{dq_{u}}{dz}|_{-0.5 \rho <z<0.5 \rho}=0$ (sign-preserving but \textit{not strongly}), one can prove convergence to $\ve$-neighborhood of $\mb{x}^*$, as a direction of our future research.



\bibliographystyle{IEEEbib}
\bibliography{bibliography}
\end{document}